\newtheorem{theorem}{Theorem}[section]
\newtheorem{lemma}[theorem]{Lemma}
\newcommand\confidence{\text{confidence}}
\newcommand\correlation{\text{correlation}}
\newcommand\lift{\text{lift}}
\newcommand\maxlift{\text{maxlift}}
\newcommand\quality{\text{quality}}
\newcommand\conf{\text{conf}}
\newcommand\corr{\text{corr}}
\title[Apriori\_Goal Algorithm]{Apriori\_Goal Algorithm for Building Association Rules in a Classified Database}
\author{Vladimir Billig}
\address{Tver State Technical University}
\email{billig.vladimir@gmail.com}
\date{}
\begin{document}

\maketitle

\begin{abstract}
An efficient Apriori\_Goal algorithm is proposed for constructing association rules in a relational database with predefined classification. The target parameter of the database specifies a finite number of goals $Goal_k$, for each of which the algorithm constructs association rules of the form $X \Rightarrow Goal_k$. The quality of the generated rules is characterized by five criteria: two represent rule frequency, two reflect rule reliability, and the fifth is a weighted sum of these four criteria.

The algorithm initially generates rules with single premises, where the correlation criterion between the premise $X$ and the conclusion $Goal_k$ exceeds a specified threshold. Then, rules with extended premises are built based on the anti-monotonicity of rule frequency criteria and the monotonicity of rule reliability criteria. Newly constructed rules tend to decrease in frequency while increasing in reliability. The article proves several statements that justify the rule construction process.

The algorithm enables the construction of both high-frequency and rare rules with low occurrence frequency but high reliability. It also allows for the generation of negative rules with negative correlation between the premise and conclusion, which can be valuable in practical applications for filtering out undesired goals.

The efficiency of the algorithm is based on two factors: the method of encoding the database and its partitioning into subsets linked to the target parameter. Time complexity estimates for rule construction are provided using a medical database as an example.
\end{abstract}

\section{Introduction}

Association rules are a crucial tool in knowledge extraction from databases. Their advantage lies in the fact that extracted knowledge is represented in implication form:

\begin{equation}
X \Rightarrow Y
\end{equation}

where $X$ is the rule premise, and $Y$ is the rule conclusion. This format is intuitive for experts who utilize extracted knowledge, interpreting the rule as "if $X$, then $Y$" or "$X$ implies $Y$." However, a rule is valuable only if it holds with high probability.

Association rules extracted from databases gained widespread use after the introduction of the Apriori algorithm, proposed by R. Agrawal and colleagues \cite{agrawal1993, agrawal1994, agrawal1995}. The Apriori algorithm considers a finite set of literals:

\begin{equation}
D = \{d_1, d_2, \dots, d_m\}
\end{equation}

and a database:

\begin{equation}
db = \{t_1, t_2, \dots, t_n\}
\end{equation}

consisting of records (transactions), where each transaction $t_k$ is a subset of literals.

In an association rule (1), $X$ and $Y$ are non-empty and non-overlapping subsets of $D$:

\begin{equation}
X \cap Y = \emptyset
\end{equation}

For each rule constructed by the Apriori algorithm, two characteristics determine its value: frequency and confidence. Frequency is defined as the ratio of transactions that contain both the premise and conclusion of the rule to the total number of transactions. Confidence is the ratio of transactions that contain both the premise and conclusion to those that contain only the premise.

The objective of the Apriori algorithm is to generate all association rules whose frequency and confidence exceed predefined minimum thresholds.

The efficiency of the Apriori algorithm, which enables it to manage the combinatorial explosion of possible rules, is based on the property of anti-monotonicity of frequency. If a set $Z$ is not frequent, then any of its supersets (extensions of $Z$) are also not frequent. As a result, the algorithm first identifies all frequent subsets of $D$, starting with single frequent items and expanding them iteratively. In the second stage, these frequent subsets are used to generate rules that satisfy the required confidence level.

Frequency is the primary criterion: if a set of literals is not frequent, it is excluded from rule generation, even if it has a high probability of occurrence.

Subsequent variations of association rule mining algorithms have, in one way or another, built upon the fundamental idea of the Apriori algorithm. These variations have evolved in different directions.

One primary focus has been improving the efficiency of the algorithm to make it applicable to large databases. The key idea behind this optimization is to construct a data structure that allows rule generation using only a subset of the original database records. Several studies have contributed to this direction. For instance, F. Bodon \cite{bodon2003} proposed a sorting tree structure that demonstrated its effectiveness in database compression. Another efficient implementation of the Apriori algorithm was developed by Christian Borgelt \cite{borgelt2002, borgelt2003, borgelt2004}, in which the database is compressed into a transaction tree. A distinctive feature of these works is their optimization of Apriori into a highly efficient implementation written in C. Borgelt’s survey \cite{borgelt2012} provides an overview of various algorithms for frequent pattern mining.

The database model in the classic Apriori algorithm is based on a market basket analysis framework, where each record represents a shopping basket in a supermarket. The literals in this model correspond to individual products, each assigned a unique identifier (ID). Given that the number of products can be quite large, a hierarchical structure is often introduced to group products at different levels. Several studies have focused on the construction of multi-level association rules. The first notable work in this area was by J. Han and Y. Fu \cite{han1999}. Later studies, such as \cite{shrivastava2013}, analyzed the efficiency of multi-level rule construction. The seminal book by J. Han, M. Kamber, and J. Pei \cite{han2012}, which covers fundamental concepts and methodologies in data mining, provides an in-depth discussion of various approaches to association rule mining within a broader data mining framework.

However, the itemset-based database model is a special case of a more general relational database structure. It is desirable to apply association rule mining to relational databases as well. Such cases are explored in the work of Minakshi Kaushik and co-authors \cite{kaushik2021}, where the authors address how to adapt Apriori to datasets with different attribute types, including continuous and categorical attributes. Their study examines various approaches to transforming such data into the itemset format required by the Apriori algorithm.

As mentioned earlier, a fundamental characteristic of the Apriori algorithm is the preprocessing of frequent itemsets before rule generation. However, this constraint results in the loss of potentially valuable rules that exhibit low frequency but high confidence—rules that may be particularly insightful for experts. Identifying such rare but highly reliable rules is often of significant practical interest. Studies such as \cite{antonello2021, borah2020, capillar2023} focus on mining rare association rules in databases where item frequencies vary widely.

Another area of research involves the discovery of negative association rules, where the presence of one attribute reduces the likelihood of another. Studies \cite{qiu2018, huang2022} address techniques for mining negative patterns, which can be just as valuable in practice as positive association rules.

Association rule mining has also been extensively applied to solving real-world problems across various domains. In particular, several studies \cite{piran2024, billig2016, billig2017} focus on extracting rules from medical databases, where association patterns can help identify important diagnostic and predictive factors. Such databases serve as a model framework for the Apriori\_Goal algorithm proposed in this paper.

 How is Apriori\_Goal Different?

We do not propose an optimized version of Apriori that extracts the same rules more efficiently. Instead, the Apriori\_Goal algorithm constructs association rules based on different criteria, making them more relevant to practical applications.

The main objective of Apriori\_Goal is to construct rules with increasing confidence. Unlike the classical Apriori algorithm, where frequency is the primary filtering criterion, Apriori\_Goal does not rely on frequency as the main constraint. The rules generated can be both frequent and rare, provided they exhibit strong confidence levels.

Furthermore, the classical Apriori algorithm treats all literals equivalently, allowing the construction of both $X \Rightarrow Y$ and $Y \Rightarrow X$. However, in our model, the database records represent object properties. These properties can be either categorical or continuous, but one key feature is the presence of a target parameter (Goal) that defines the classification. All other properties are considered input parameters, forming the premise of the rule.

Thus, our task is to identify a premise $X$, consisting of input parameters, that strongly predicts a given goal $Goal_k$:

\begin{equation}
X \Rightarrow Goal_k
\end{equation}

Additionally, Apriori\_Goal constructs negative rules, where the presence of $Y$ strongly negates the occurrence of $Goal_k$:

\begin{equation}
Y \Rightarrow \neg Goal_k
\end{equation}

We believe that databases with a target parameter represent the general case characteristic of real-world databases, whereas the absence of a target parameter is a special case that requires separate consideration. In our model, a medical database serves as the primary example, where patients represent the objects, and records describe their attributes, such as Age, Gender, Temperature, Blood Pressure, and Lab Test Results. The target parameter in this context is the Diagnosis, determined by a physician.

In designing the Apriori\_Goal algorithm, we prioritized sufficient efficiency, opting not to construct a complex database structure aimed solely at minimizing rule generation time. Instead, our algorithm's efficiency is achieved through two key factors:

Parallelization: Since rule generation for different values of $Goal_k$ can be executed independently, the algorithm allows parallel processing. This enables each parallel thread to work with only a subset of the database, eliminating the need for any single thread to traverse the entire dataset.

Optimized Encoding of Binary Attributes: The encoding scheme used in Apriori\_Goal compresses the original database by representing each record as a single integer, regardless of its length, while still preserving all necessary information about the binary attributes of the objects in the dataset. This same encoding method is applied to attribute sets forming the premise of a rule.

This encoding technique significantly enhances computational efficiency by replacing set-based operations with bitwise logical operations on integers, which are executed almost instantaneously.

This paper provides a detailed description of the Apriori\_Goal algorithm, building upon the ideas presented in previous works \cite{billig2016, billig2017}.

The structure of this paper is as follows:

Section 2 discusses the characteristics of databases with a target parameter, along with the preprocessing tasks performed at the initial stage of the algorithm.
Section 3 introduces the evaluation criteria used to assess the quality of the generated association rules.
Section 4 explores the data structures and core principles behind the algorithm’s construction.
Section 5 examines the efficiency factors, presenting experimental performance evaluations obtained from a real-world database.
Section 6 provides theoretical justifications supporting the correctness of the proposed rule construction method.
Finally, the Conclusion summarizes the key findings.

\section{Database and Preprocessing}

\subsection{Database Encoding}

The Apriori\_Goal algorithm is designed for a general database schema, where a database is viewed as a collection of objects described by their properties. The set \( D \) represents the set of all properties:

\begin{equation}
D = \{P_1, P_2, \dots, P_m\}
\end{equation}

The database itself can be represented as a typical relational table with \( N \) rows (records) and \( M \) columns (properties of objects). During the preprocessing stage, each property is transformed into a set of binary attributes, which can be interpreted as literals in the classical Apriori algorithm. The resulting set of binary properties is considered an enumeration, where each element is assigned a unique integer code.

Each database record, represented by a set of binary properties, is encoded as a single integer, computed as the sum of the assigned codes of all included properties. The details of this encoding process are discussed in Section 2.3.

\subsection{Databases with a Target Parameter}

The Apriori\_Goal algorithm is specifically designed for databases that include a target parameter. This target parameter is a categorical attribute with a finite set of values, each representing a distinct goal. The presence of a target parameter allows the classification of database records. By sorting the records according to their target parameter values, the database can be partitioned into subsets:

\begin{equation}
db = \{db_1, db_2, \dots, db_p\}
\end{equation}

Each subset \( db_k \) consists of records that share the same target parameter value.

For example, in a medical database, the target parameter is the diagnosis, and association rules can help identify significant patterns among patient attributes that correlate with specific diagnoses. Similarly, in a banking database containing customer information, association rules can classify clients and recommend appropriate financial products.

\subsection{Preprocessor}

The preprocessor plays a crucial role in data preparation. Its functions include:

\begin{itemize}
    \item Reading the raw database, which may be stored in different formats such as Excel, CSV, or other structured files.
    \item Sorting the records based on the values of the target parameter.
    \item Converting continuous and categorical attributes into binary attributes.
    \item Encoding binary attributes into numerical values.
\end{itemize}

\subsubsection{Transformation of Attributes into Binary Form}

Each column in the database can be of either categorical or continuous type. Continuous attributes are first discretized into categorical values. This discretization process involves dividing the range of continuous values into \( k \) intervals (bins). 

For example, an attribute representing body temperature might be classified into three categories: 
- Below normal, 
- Normal, and 
- Above normal.

Each transformed categorical attribute is then mapped to \( k \) separate binary attributes. The names of these binary attributes are automatically generated based on the original attribute name and the category index.

For example, for the Temperature attribute (with three categories), the database will contain three new binary columns:
- \( T_0 \) (temperature below normal),
- \( T_1 \) (normal temperature),
- \( T_2 \) (temperature above normal).

For any given record, only one of these binary attributes will have a value of 1, while the others will be set to 0.

The resulting set of binary properties is:

\begin{equation}
P = \{P_1, P_2, \dots, P_m\}
\end{equation}

This set can be treated as a collection of literals in the classical Apriori algorithm.

\subsubsection{Encoding of Binary Attributes}

Binary attributes are numerically encoded using the following scheme: each binary attribute indexed by \( k \) in the enumeration is assigned a code equal to \( 2^k \).

Thus, a database record can be represented as a single integer \( N \), computed as:

\begin{equation}
N = \sum_{j=1}^{m} 2^{(j-1)}
\end{equation}

where summation is performed over all \( j \) indices where \( P_j = 1 \).

This encoding ensures that \( N \) is always within the range:

\begin{equation}
0 < N < 2^m
\end{equation}

In the binary representation of \( N \), the positions of ones correspond to the active binary attributes of the record.

This encoding technique is applied to both database records and itemsets forming the premise of association rules.

\subsubsection{Practical Considerations}

A potential limitation of this encoding method is that the number \( N \) can become very large. For example, if \( m = 100 \) binary attributes are used, the value of \( N \) could require over 30 digits, exceeding the standard integer limits in many programming languages.

However, this issue is easily addressed in modern programming environments:
- In Python, integers have no fixed size limit, making large numbers manageable.
- In C\#, the BigInteger type efficiently handles large numbers and is used in our implementation.

\subsubsection{Additional Information for Preprocessing}

To perform all required transformations, the preprocessor requires additional metadata. In our implementation, this metadata is provided via two input files:

\begin{itemize}
    \item Database File (db) – Contains the raw dataset.
    \item Database Description File (dbd) – Provides metadata about the database structure.
\end{itemize}

The dbd file contains one row per column of the original database, with each row specifying:

\begin{itemize}
    \item Column Type: Can be target, continuous, or categorical.
    \item Short Name: Used for naming attributes in the association rules.
    \item Number of Classes: For categorical and target columns, this specifies the number of possible values. For continuous columns, it defines the number of discretization bins.
    \item Value List: 
        - For the target column, this lists the size of each database subset.
        - For categorical attributes, this enumerates all possible values.
        - For continuous attributes, this defines the interval boundaries used for discretization.
    \item Full Name: A human-readable description of the attribute, useful for interpreting generated rules.
\end{itemize}

The information provided in the dbd file is sufficient to convert every database record into an integer representation \( N \), encoding its binary properties. 

Moreover, the conversion process supports parallelization, allowing multiple records to be processed simultaneously.

\section{Association Rules and Their Characteristics}

\subsection{Association Rules}

In the Apriori\_Goal algorithm, association rules take the form:

\begin{equation}
X \Rightarrow Goal_k
\end{equation}

The target of the rule is fixed, and the algorithm searches for the premise \( X \), which is a set of input parameters that are associated with the given target \( Goal_k \).

There are many possible variations of the set \( X \). The key challenge is identifying "good" sets that generate meaningful rules and reveal hidden knowledge within the original database. The quality of a rule is determined by specific criteria that can be computed for a given association rule.

\subsection{Rule Quality Criteria}

In the classical Apriori algorithm, rule quality is characterized by two key metrics computed from the database: frequency and confidence.

In the Apriori\_Goal algorithm, rule quality is assessed using five criteria. Two of these criteria, \( f_g \) and \( f_{all} \), measure the frequency of the rule. The first criterion, \( f_g \), represents the frequency of occurrence of the combined set \( X \cup Goal_k \) within the subset \( db_k \), which corresponds to the target \( Goal_k \). The second criterion, \( f_{all} \), expresses the frequency of this set relative to the entire database.

The next two criteria, confidence and correlation, assess the reliability of the rule from different perspectives.

A fifth, generalized criterion, quality, is constructed as a weighted sum of these four criteria.

The fundamental operation in the Apriori\_Goal algorithm is the computation of the support of the set \( X \) in either the entire database or a subset of it. Support is defined as the number of records containing the set \( X \):

\begin{equation}
Support(X, db) = |\{Z_k \mid X \subseteq Z_k, Z_k \in db\}|
\end{equation}

In this definition, \( Z_k \) represents the records of the analyzed database. Conceptually, this operation determines whether a given record \( Z_k \), which consists of a set of binary properties, contains the subset \( X \) of binary properties.

All rule quality metrics are derived using this fundamental support operation.

\begin{equation}
f_g = \frac{Support(X, db_k)}{n_k}, \quad f_{all} = \frac{Support(X, db_k)}{N} = f_g \cdot \frac{n_k}{N}
\end{equation}

The frequency metrics are constrained by the following ranges:

\begin{equation}
0 \leq f_g \leq 1, \quad 0 \leq f_{all} \leq \frac{n_k}{N}
\end{equation}

It is important to note that both frequency criteria rely on the same fundamental operation, as the combined set \( X \cup Goal_k \) appears only within the records of the subset \( db_k \), which corresponds to the given target.

Both criteria are valuable for experts analyzing association rules. In medical applications, for example, \( f_g \) represents the frequency of occurrence of property set \( X \) among patients diagnosed with \( Goal_k \), while \( f_{all} \) measures how often \( X \) appears across all patients.

A third key metric for association rule evaluation is confidence, which measures the reliability of the rule. It is defined as the proportion of records containing \( X \) within the subset \( db_k \) compared to all records in the database that contain \( X \):

\begin{equation}
\text{confidence}(X \Rightarrow Goal_k) = \frac{Support(X, db_k)}{Support(X, db)}
\end{equation}

The confidence metric falls within the range:

\begin{equation}
0 \leq \text{confidence} \leq 1
\end{equation}

The upper bound is reached when \( X \) occurs exclusively in records belonging to subset \( db_k \). Even when the frequency of occurrence is low, rules with high confidence may be of great interest to experts, as they indicate that the set \( X \) strongly correlates with the target \( Goal_k \).

Another important criterion, known as lift, is based on the probabilistic interpretation of attributes. The set of attributes \( X \) is treated as a multidimensional random variable that appears with a certain probability. The lift criterion is defined as the ratio of the conditional probability of the target given the premise \( X \) to the overall probability of the target:

\begin{equation}
\lift = \frac{P(Goal_k \mid X)}{P(Goal_k)}
\end{equation}

Switching from probabilities to frequencies, we obtain:

\begin{equation}
P(Goal_k \mid X) = \frac{\text{Support}(X, db_k)}{\text{Support}(X, db)}
\end{equation}

\begin{equation}
P(Goal_k) = \frac{n_k}{N}
\end{equation}

It follows that the lift criterion can be expressed as:

\begin{equation}
\text{lift} = \text{confidence} \times \frac{N}{n_k}
\end{equation}

It is interesting to note that the two frequency-related criteria and the two reliability-related criteria are linked by a single linear dependence.

Despite this linear dependence between the paired criteria, each criterion provides valuable information to experts due to its distinct interpretational significance.

The lift criterion can be interpreted as a measure of correlation between the premise and conclusion of the rule. Its values fall within the range:

\begin{equation}
0 \leq \text{lift} \leq \frac{N}{n_k}
\end{equation}

There are three critical points for the lift criterion:

$\bullet$ lift $= 1$. This occurs when the conditional probability \( P(Goal_k \mid X) \) equals the marginal probability \( P(Goal_k) \), meaning that the events are independent and there is no correlation between the premise and the conclusion. The correlation coefficient in this case is zero.
  
$\bullet$ lift $= 0$.  This means that the probability of the target \( Goal_k \) is zero when the premise \( X \) appears. In this case, the correlation is negative. From an expert's perspective, this suggests that the occurrence of \( X \) negates the possibility of \( Goal_k \). The correlation coefficient is \(-1\).
  
$\bullet$ lift $= 1 / P(Goal_k)$. The maximum value of the lift criterion occurs when the conditional probability of the target given \( X \) equals 1, meaning that whenever \( X \) appears, \( Goal_k \) always follows. In this case, the correlation between the premise and conclusion is positive with a coefficient of 1.

For experts, correlation between the premise and the conclusion of a rule is often easier to interpret than conditional probabilities. Therefore, the correlation criterion is introduced as the fourth characteristic of association rules. This criterion is computed as a function of lift. Using the critical points, we define a piecewise linear function as follows:

\begin{equation}
\text{correlation} = \text{lift} - 1, \quad \text{if } 0 \leq \text{lift} \leq 1
\end{equation}

\begin{equation}
\text{correlation} = \frac{\text{lift} - 1}{\text{maxlift} - 1}, \quad \text{if } 1 \leq \text{lift} \leq \text{maxlift}
\end{equation}

The values of the correlation criterion fall within the range:

\begin{equation}
-1 \leq \correlation \leq 1
\end{equation}

The value of \( \maxlift \), which equals \( 1 / P(Goal_k) \), is computed as:

\begin{equation}
\maxlift = \frac{N}{n_k}
\end{equation}

A fifth characteristic, quality, is introduced as a generalized rule evaluation criterion. It is computed as a weighted sum of the four previously defined criteria:

\begin{equation}
\quality = p_1 \times f_{all} + p_2 \times f_g + p_3 \times \confidence + p_4 \times \correlation
\end{equation}

The weights for these criteria are set by experts and, by default, are all equal to 1.

The introduction of the correlation criterion allows for the classification of association rules based on their target \( Goal_k \). The input parameters and their combinations forming the premise of a rule can be categorized into three broad classes:

$\bullet$ Null-correlated premises, which do not provide meaningful insights and are generally of little interest to experts.

$\bullet$  Positively correlated premises, which exhibit a strong association between the premise and the conclusion and are typically of high interest.

$\bullet$  Negatively correlated premises, which may also be valuable, as they help exclude a target conclusion when a specific premise is observed.

\section{Data Structures and Fundamental Ideas of the Algorithm}

\subsection{Preprocessing Stage}

At this stage, the preprocessor takes as input two files: one containing the raw database and another containing its description. The output of the preprocessor is an encoded database, represented as a list of integers. The database records are sorted and divided into \( k \) subsets according to the number of target parameter values. The preprocessor also constructs an enumeration of binary properties and generates an array of \( k \) elements that define the sizes of the database subsets.

\subsection{Main Processing Stage}

At this stage, the Rule class and the Apriori module are constructed.

\subsubsection{The Rule Class}

Objects of this class represent association rules. The class fields that define the properties of its objects include:

\begin{itemize}
    \item The rule premise \( X \), which is encoded as an integer, just like a database record, preserving information about all binary properties that make up the premise.
    \item The target index \( k \).
    \item Five criteria that define the characteristics of the rule.
\end{itemize}

The class constructor, given a premise \( X \), a target index \( k \), and the support values \( sup_k \) (support of \( X \) in the target subset) and \( sup \) (support of \( X \) in the entire database), constructs the corresponding rule and computes all its characteristic values.

\subsubsection{The Apriori Module}

The Apriori module defines methods for constructing rules and the data structures processed by these methods.

The core method responsible for rule construction follows the scheme:

\begin{verbatim}
void CreateRules()
{
    CreateCandidates();       
    bool existNewRules = true;
    while (existNewRules)
    {
        NextCurrent();
        for (int k = 0; k < P; k++)
        {
            current[k].Clear();
            for (int i = 0; i < nextcurrent[k].Count; i++)
            {
                rules[k].Add(nextcurrent[k][i]);
                current[k].Add(nextcurrent[k][i]);
            }
        } 
        existNewRules = Check();
     } 
}
\end{verbatim}

The fundamental idea behind the rule construction algorithm is simple and intuitive. Initially, a set of candidates Candidates is generated. This set serves as the initial value of the Current set, allowing the algorithm to generate rules with a premise of length 1. Then, within a loop, the premise length is incrementally extended by one.

At each iteration, the NextCurrent set is generated using the Current set (which contains premises of length \( k \)) and the Candidates set. The newly constructed rules are added to the Rules structure. The NextCurrent set is then transferred to Current, which is cleared to be populated in the next iteration. 

The loop always terminates, as the maximum length of the premise is bounded and cannot exceed the number of candidates. In practice, the loop terminates earlier as soon as no new rules are generated in an iteration.

In the AprioriGoal algorithm, candidates are binary properties that form positively correlated rules, meaning that the correlation between the rule premise and the target exceeds a specified threshold.

Extending a positively correlated rule with another positively correlated candidate results in a new rule with an increased correlation value. The correlation criterion exhibits monotonicity, meaning that the correlation can only increase when rules are extended. The theoretical justification for this property will be provided in the next section.

Due to this monotonicity, the correlation can rapidly reach its maximum value of one. In this case, the rule is considered final and is no longer expanded. The algorithm allows specifying a correlation threshold, such that once this threshold is reached, further expansion stops.

Conversely, due to the anti-monotonicity of frequency, extending a rule may reduce its frequency, potentially reaching the lower limit of zero. The algorithm allows specifying a minimum frequency threshold, below which the rule is considered final and is no longer expanded.

The methods CreateRules, CreateCandidates, and NextCurrent operate on four fundamental data structures, each represented as an array of lists where the elements are objects of the Rule class:

\begin{itemize}
    \item \texttt{List<Rule>[] candidates} – The set of candidates for rule premises. This includes rules with a single premise that have passed the threshold set by the correlation criterion.
    \item \texttt{List<Rule>[] current} – The current set of rules with a premise of length \( k \). When \( k = 1 \), this set is identical to the candidates set.
    \item \texttt{List<Rule>[] nextcurrent} – The expanded set of rules with a premise of length \( k+1 \), generated from the current and candidates sets.
    \item \texttt{List<Rule>[] rules} – The complete set of all generated rules.
\end{itemize}

In all structures, the list corresponding to the \( k \)-th element of the array contains rules associated with the \( k \)-th target.

\section{Efficiency}

The goal of the Apriori\_Goal algorithm is to provide experts with a convenient tool for discovering and analyzing association rules with high efficiency.

One of the key factors ensuring the efficiency of the algorithm is the method used for encoding binary properties. Database records and rule premises, which represent sets of binary properties, are encoded as integers. This encoding approach replaces computationally expensive set operations with bitwise operations on integers, which are executed almost instantaneously.

Another important factor influencing efficiency is the presence of a target parameter, which allows the database to be partitioned into subsets. To compute the characteristics of generated rules, it is sufficient to process these database subsets in parallel, each associated with a specific target.

\subsection{Efficiency of the Fundamental Operation}

The efficiency of the algorithm is largely determined by the performance of its fundamental operation, which computes the support of a given set and requires traversal of the database.

The Support method, which calculates the support of a premise \( X \) in the database, combines both efficiency factors, significantly enhancing algorithm performance.

Below is the implementation of this method from the C\# implementation of the algorithm:

\begin{verbatim}
public static (int[], int) Support(BigInteger X)
{
    int[] sup_k = new int[P];
    int sup = 0;
    Parallel.For(0, P, (k) =>
    {
        int start = startk[k];
        int finish = start + nk[k];
        for (int i = start; i < finish; i++)
            if ((X & db[i]) == X)
                sup_k[k] += 1;
    });
    for (int i = 0; i < P; i++)
        sup += sup_k[i];
    return (sup_k, sup);
}
\end{verbatim}

In this method, \( P \) parallel threads are executed, each processing its corresponding subset of the database. Each thread computes \( sup_k[k] \), which represents the support of premise \( X \) within a specific subset. Once all threads complete, the overall support \( sup \) is computed by summing the elements of the array.

A crucial aspect of this method is the efficient verification of whether a database record \( db[i] \) contains the property set \( X \). This verification is performed using a logical condition on integers, avoiding explicit set operations within the algorithm.

\subsection{Efficiency of Rule Premise Expansion}

Another key factor influencing the overall efficiency of the algorithm is the process of expanding rule premises.

A rule premise of length \( k \) is defined as a set of \( k \) binary properties. These binary properties are ordered according to the predefined sequence of binary properties. A premise can only be expanded by candidates whose index in the enumeration is greater than the highest-indexed binary property currently in the premise.

The encoding method used for binary properties allows premise expansion without analyzing the properties already included in the premise. The candidate selection process is simplified to a comparison between the candidate value and the premise value. The candidate's encoded integer value must be greater than the encoded value of the premise, which represents the sum of the codes of all binary properties in the premise.

Below is a fragment of the program code that determines the valid candidates for premise expansion and executes the expansion process:

\begin{verbatim}
num = 0;
while (num < candidates[k].Count && 
    rule.X >= candidates[k][num].X) num++;
for (int q = num; q < candidates[k].Count; q++)
{
    Z = rule.X + candidates[k][q].X;
    ...
} 
\end{verbatim}

The while loop determines the minimum index of a candidate that is eligible for expanding the premise. The for loop then expands the premise using valid candidates. The new premise \( Z \) of length \( k+1 \) is constructed by simply adding two integers. This approach ensures maximum efficiency, as the expansion operation consists of just one integer addition.

\subsection{Complexity Analysis}

The Support method, which computes the support of a rule premise, has a complexity of:

\begin{equation}
O(N / k)
\end{equation}

where \( N \) is the number of records in the database.

The rule constructor, which computes the characteristics of a rule, performs a finite number of operations and has a complexity of:

\begin{equation}
O(1)
\end{equation}

Thus, the overall runtime complexity of the algorithm that generates \( M \) rules is:

\begin{equation}
O(M \cdot N / k)
\end{equation}

\subsection{Experimental Evaluation of Rule Construction Time}

We present the results of an experimental evaluation of rule construction time using the {\it diabetes} dataset, available in the Scikit-learn package in Python \cite{sklearn}. This dataset includes a target parameter and 10 input parameters of both continuous and categorical types, containing a total of 442 records. By duplicating the records, we expanded the dataset to 4 million records.

Table \ref{tab:runtime} shows the rule search time as a function of the database size and the number of rules generated.

\begin{table}[h]
\centering
\begin{tabular}{|c|c|c|}
\hline
\textbf{Number of Records} & 
\makecell{\textbf{Number of}  \\ \textbf{Rules Generated}}
 &  \makecell{\textbf{Rule Construction} \\ \textbf{Time (sec)}} \\
\hline
44200 & 13 & 0.015 \\
44200 & 22 & 0.025 \\
442000 & 13 & 0.17 \\
442000 & 22 & 0.20 \\
4420000 & 13 & 1.85 \\
4420000 & 22 & 2.04 \\
\hline
\end{tabular}
\caption{Rule construction time}
\label{tab:runtime}
\end{table}

A total of 22 rules were generated from a 4-million-record database in approximately 2 seconds. The computations were performed on a standard Dell laptop.

This level of performance was achieved using the full version of the algorithm, which operates with the BigInteger type. Given that the number of input parameters in this dataset is relatively small, a lightweight version using the ulong (unsigned long integer) type in C\# could be employed, which would further speed up computations.

\section{Correctness}

We now prove statements that hold for the Apriori\_Goal algorithm.

\begin{theorem}[Anti-Monotonicity of Rule Frequency]
When a positively correlated rule with premise \( X \) is expanded using a positively correlated candidate \( Y \), the frequency of the expanded rule \( f_g(X, Y) \) can only decrease relative to the frequencies of the original rules and satisfies the following bounds:
\begin{equation}
\max(0, f_g(X) + f_g(Y) -1 ) \leq f_g(X, Y) \leq \min(f_g(X), f_g(Y))
\end{equation}
\end{theorem}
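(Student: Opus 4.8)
The plan is to recognize that the frequency criterion $f_g$ is nothing more than a normalized counting measure on the target subset $db_k$, so that the claimed bounds reduce to the classical Fréchet inequalities for the intersection of two events. First I would fix the target index $k$ and restrict attention entirely to the subset $db_k$, which contains $n_k$ records. Within this subset I would introduce the two collections of records $A = \{ Z \in db_k : X \subseteq Z \}$ and $B = \{ Z \in db_k : Y \subseteq Z \}$, so that $Support(X, db_k) = |A|$ and $Support(Y, db_k) = |B|$. The key translation step is to observe that a record $Z$ contains the extended premise $X \cup Y$ precisely when it contains both $X$ and $Y$, i.e.\ when $Z \in A \cap B$; hence $Support(X \cup Y, db_k) = |A \cap B|$. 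Dividing through by $n_k$, the three frequencies become $f_g(X) = |A|/n_k$, $f_g(Y) = |B|/n_k$, and $f_g(X, Y) = |A \cap B|/n_k$.

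With this dictionary in place, both bounds follow from elementary set counting. For the upper bound I would use the inclusions $A \cap B \subseteq A$ and $A \cap B \subseteq B$, which give $|A \cap B| \le \min(|A|, |B|)$ and hence $f_g(X, Y) \le \min(f_g(X), f_g(Y))$ after normalization; this is exactly the asserted anti-monotonicity, showing the expanded frequency cannot exceed either parent frequency. For the lower bound I would apply inclusion--exclusion, $|A \cup B| = |A| + |B| - |A \cap B|$, together with the trivial bound $|A \cup B| \le n_k$, to deduce $|A \cap B| \ge |A| + |B| - n_k$; combining with the nonnegativity $|A \cap B| \ge 0$ and dividing by $n_k$ yields $f_g(X, Y) \ge \max(0,\, f_g(X) + f_g(Y) - 1)$.

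Since the underlying inequalities are purely combinatorial, there is no serious analytic obstacle; the main point requiring care is the set-theoretic translation, namely the justification that ``a record contains $X \cup Y$'' is equivalent to ``it contains $X$ and it contains $Y$.'' I would also remark that, because the candidate $Y$ carries a binary-property index strictly greater than every index appearing in $X$, the sets $X$ and $Y$ occupy disjoint bit positions, so the integer sum $Z = X + Y$ used by the algorithm faithfully encodes the union $X \cup Y$; this ensures that the quantity the algorithm actually computes is the $|A \cap B|/n_k$ appearing in the proof. Finally I would note that the positive-correlation hypothesis on $X$ and $Y$ plays no role in these particular bounds --- the Fréchet inequalities hold for arbitrary premises --- but it is the operative condition under which the algorithm generates the candidates being combined.
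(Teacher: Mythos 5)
Your proof is correct and follows essentially the same route as the paper's: both bounds are the Fréchet inequalities for the intersection of the two record sets in $db_k$, which the paper argues informally in terms of ``maximal'' and ``minimal'' overlap and you make precise via $|A\cap B|\le\min(|A|,|B|)$ and inclusion--exclusion. Your version is in fact tighter than the paper's sketch (and your closing remarks about the encoding and the irrelevance of the positive-correlation hypothesis are accurate), but the underlying argument is the same.
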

\begin{proof}
The frequency of the expanded rule is proportional to the number of records in the subset \( db_k \) that contain both \( X \) and \( Y \). The maximum possible number of such records is achieved when the overlap between the records containing \( X \) and those containing \( Y \) is maximal. This observation establishes the validity of the upper bound in inequality (31).

The lower bound in (31) holds in the case of minimal overlap. In the extreme case, this overlap may be empty, but when the frequencies of the original rules are high, there will still be records that contain both \( X \) and \( Y \).

Thus, inequality (31) confirms that rule frequency does not increase when the premise is expanded; in the best case, it remains at the frequency of the less frequent of the two original rules.
\end{proof}
We now establish several statements that describe the behavior of rule reliability criteria as the rule premise is expanded. Without loss of generality, we assume that the database is partitioned into two subsets: one associated with the target \( Goal_k \) and the other associated with its negation \( \neg Goal_k \).

We begin with the following simple lemma:

\begin{lemma}
If the confidence of the rule \( X \Rightarrow Goal_k \) is greater than 0.5, then the probability of \( X \) appearing in the subset \( db_k \) is higher than its probability in the opposite subset.
\end{lemma}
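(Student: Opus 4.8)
The plan is to reduce the hypothesis on $\confidence$ to an elementary inequality between two support counts and then read off the probability comparison. First I would unfold the definitions in the two-subset setting. Since $db$ is the disjoint union of $db_k$ and the complementary subset, which I denote $db_{\neg k}$, the support of $X$ over the whole database splits additively:
\begin{equation}
Support(X, db) = Support(X, db_k) + Support(X, db_{\neg k}).
\end{equation}
Abbreviating $s_k = Support(X, db_k)$ and $s_{\neg k} = Support(X, db_{\neg k})$, the confidence from (16) becomes $\confidence = s_k / (s_k + s_{\neg k})$.

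Next I would exploit the assumption $\confidence > 1/2$. Clearing the denominator turns $s_k/(s_k + s_{\neg k}) > 1/2$ into $2 s_k > s_k + s_{\neg k}$, that is $s_k > s_{\neg k}$. So the hypothesis is exactly equivalent to the statement that strictly more records containing $X$ fall into $db_k$ than into its complement. This is the entire computational content of the lemma, and it is a one-line manipulation.

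The last step is to translate $s_k > s_{\neg k}$ into the claimed comparison of probabilities. Taking probabilities over the whole database, the probability that a record both contains $X$ and belongs to $db_k$ is $s_k/N$, which is precisely $f_{all}$ as in (14), while the analogous probability for the opposite subset is $s_{\neg k}/N$. These two expressions share the common denominator $N$, so $s_k > s_{\neg k}$ transfers at once to $s_k/N > s_{\neg k}/N$, which is the assertion.

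The only delicate point — and the place where the argument could break down — is pinning down what normalization the word ``probability'' carries. If one instead compared the within-subset frequencies $s_k/n_k$ and $s_{\neg k}/(N - n_k)$, the denominators would differ and $s_k > s_{\neg k}$ would no longer suffice on its own unless the target subset were no larger than its complement. I would therefore make explicit that the probabilities are taken relative to the same total (equivalently, conditioned on the occurrence of $X$, which yields $\confidence$ against $1 - \confidence$), so that both sides carry identical normalization and the conclusion is immediate.
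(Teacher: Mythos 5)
Your computational core --- splitting the support over the two subsets, writing \( \conf = a/(a+b) \) with \( a \) and \( b \) the occurrence counts of \( X \) in \( db_k \) and in its complement, and clearing the denominator to get \( a > b \) --- is exactly the paper's proof; the paper stops at \( a > b \) and declares the lemma proved. Where you go beyond the paper is the closing caveat about what ``probability'' normalizes against, and that caveat is well taken rather than pedantic. The paper's later use of this lemma (via the remark after Lemma 6.3, feeding into the proof of Theorem 6.4) reads the conclusion as a comparison of within-subset frequencies, \( a/n_k > b/(N-n_k) \), which is what the quantities \( P(X) \) and \( Q(X) \) in Theorem 6.4 denote; as you observe, \( a > b \) alone yields that only when the target subset is no larger than its complement. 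Your explicit choice of a common normalization (equivalently, conditioning on the occurrence of \( X \) and comparing \( \conf \) with \( 1-\conf \)) makes the lemma as stated true and your argument complete; the residual ambiguity you flag is an imprecision in the paper's statement and its downstream use, not a gap in your proof.
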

\begin{proof}
By definition, confidence is given by:

\begin{equation}
\conf = \frac{a}{a + b}
\end{equation}

where:
- \( a \) is the number of occurrences of \( X \) in the subset \( db_k \),
- \( b \) is the number of occurrences of \( X \) in the opposite subset.

If:

\begin{equation}
\frac{a}{a + b} > 0.5
\end{equation}

then it follows that \( a > b \), which proves the lemma.
\end{proof}

Now, let \( n_k \) and \( n \) represent the sizes of the two subsets, and define \( p = \frac{n}{n_k} \) as their ratio. We then establish the following result:

\begin{lemma}
If the correlation criterion satisfies:
\begin{equation}
\text{\rm corr} > \frac{p-1}{2p}
\end{equation}
then the confidence criterion satisfies:
\begin{equation}
\text{\rm conf} > 0.5
\end{equation}
\end{lemma}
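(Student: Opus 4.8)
The plan is to exploit the affine relationship between correlation and confidence that is already implicit in equations (21)--(26), and to show that the stated threshold $\frac{p-1}{2p}$ is exactly the correlation value at which $\conf = 0.5$. Strict monotonicity of correlation in confidence then converts the correlation bound into the confidence bound.

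First I would fix the two-subset setting: the database is split into $db_k$ of size $n_k$ and its complement of size $n$, so the total is $N = n_k + n$ and, by (26), $\maxlift = N/n_k = 1 + n/n_k = 1 + p$. Combining the identity $\lift = \conf \cdot \maxlift$ (equations (21) and (26)) with this, the value $\conf = 1/2$ corresponds to $\lift = (1+p)/2$. In the practically relevant regime where the target class is a minority, $n \ge n_k$ (so $p \ge 1$), this lift is at least $1$, which places us on the upper branch (24) of the piecewise-linear correlation function.

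Next I would write correlation explicitly as a function of confidence. On the upper branch (24),
\begin{equation}
\corr = \frac{\lift - 1}{\maxlift - 1} = \frac{\conf\,(1+p) - 1}{p},
\end{equation}
which is affine in $\conf$ with positive slope $(1+p)/p$, hence strictly increasing. Setting this equal to the threshold and solving gives $\conf\,(1+p) = (p+1)/2$, i.e.\ $\conf = 1/2$. By strict monotonicity, $\corr > \frac{p-1}{2p}$ is therefore equivalent to $\conf > 1/2$, which is the assertion.

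The step I expect to be the main obstacle is the hidden case distinction. The clean computation relies on $\conf = 0.5$ falling on the upper branch, which holds precisely when $p \ge 1$. When $p < 1$ (the target class is the majority), $\conf = 0.5$ yields $\lift = (1+p)/2 < 1$, so the lower branch (23) applies, and there the correlation at $\conf = 0.5$ equals $\frac{p-1}{2}$ rather than $\frac{p-1}{2p}$; since $\frac{p-1}{2p} < \frac{p-1}{2}$ in that range, the stated implication can fail. The crux is thus to recognize that the threshold $\frac{p-1}{2p}$ is calibrated for the minority-target regime $p \ge 1$ (the case of interest), and to confirm that the correlation function is continuous and strictly increasing across the junction $\lift = 1$, so that the monotonicity argument is unambiguous.
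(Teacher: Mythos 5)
Your proposal is correct and follows essentially the same route as the paper: both invert the affine relation $\corr = \bigl(\conf\,(1+p)-1\bigr)/p$ on the upper branch and observe that the threshold $\frac{p-1}{2p}$ is exactly the correlation value at $\conf = 1/2$. Your explicit treatment of the branch selection and of the $p<1$ case is a sound refinement of the paper's argument, which instead disposes of $p<1$ by invoking the standing positive-correlation assumption (under which $\conf > 1/(1+p) > 1/2$ automatically).
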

\begin{proof}
From the definition of lift, we have:

\begin{equation}
\lift = \conf \times \frac{n_k + n}{n_k} = \conf \times (1 + p)
\end{equation}

For a positively correlated rule, we know that \( \lift > 1 \). Indeed, if \( \lift \leq 1 \), then by definition:

\begin{equation}
\corr = \lift - 1 \leq 0
\end{equation}

which contradicts the assumption of positive correlation.

Since \( \lift > 1 \), we use the definition:

\begin{equation}
\maxlift = \frac{N}{n_k} = 1 + p
\end{equation}

\begin{equation}
\corr = \frac{\lift - 1}{\maxlift - 1} = \frac{\conf \times (1 + p) - 1}{p}
\end{equation}

Since correlation is positive, we derive the bound:

\begin{equation}
\conf > \frac{1}{1 + p}
\end{equation}

If \( p < 1 \), then confidence is always greater than 0.5.

If \( p > 1 \), then positive correlation alone is not sufficient to guarantee that confidence exceeds 0.5. Let \( d \) be the correlation threshold. What threshold \( d \) must be set for a given \( p \) to ensure that confidence exceeds 0.5?

It follows that the threshold must satisfy:

\begin{equation}
d > \frac{p-1}{2p}
\end{equation}

which proves the statement.
\end{proof}

For \( p = 2 \), the correlation criterion must exceed 0.25. If the correlation criterion exceeds 0.5, then confidence is guaranteed to be above 0.5. According to Lemma 6.2, in this case, the probability of \( X \) appearing in the subset \( db_k \) is greater than its probability in the opposite subset.

Let \( X \) and \( Y \) be the premises of two rules treated as independent multidimensional random variables. We will call them \( Goal_k \)-independent if the following equalities for conditional probabilities hold:
\begin{align*}
P(X \& Y / Goal_k) &= P(X / Goal_k) P(Y / Goal_k), \\
P(X \& Y / \neg Goal_k) &= P(X / \neg Goal_k) P(Y / \neg Goal_k).
\end{align*}

 Then the following holds:

\begin{theorem}[Monotonicity of Confidence and Correlation Criteria]
Let $X$ and $Y$ be $Goal_k$-independent  premises of two positively correlated rules. When a rule with premise \( X \) is expanded using \( Y \), the confidence of the expanded rule can only increase relative to the confidence of the original rules.
\end{theorem}
\begin{proof}
Using definition (16), confidence can be expressed as:

\begin{equation}
\conf(X) = \frac{P(X) \cdot n}{P(X) \cdot n + Q(X) \cdot (N-n)} = \frac{a}{a+b}
\end{equation}
where:
- \( P(X) \) is the probability of \( X \) appearing in records within the subset \( db_k \) associated with the target \( Goal_k \),
- \( n \) is the size of this subset,
- \( Q(X) \) is the probability of \( X \) appearing in database records not associated with \( Goal_k \),
- \( N \) is the total number of database records.

Now, we compute the difference:

\begin{align}
&\conf(X, Y) - \conf(X) = \nonumber \\
& \frac{P(X) P(Y) n}{P(X) P(Y) n + Q(X) Q(Y) (N - n)}
- \frac{P(X) n}{P(X) n + Q(X) (N - n)}
\end{align}

Simplifying, we obtain:

\begin{equation}
\conf(X, Y) - \conf(X) = c_1 (P(Y) - Q(Y))
\end{equation}

Similarly,

\begin{equation}
\conf(X, Y) - \conf(Y) = c_2 (P(X) - Q(X))
\end{equation}

The confidence of the expanded rule exceeds the confidence of the original rules if the probabilities of \( X \) and \( Y \) occurring in the subset \( db_k \) are greater than their probabilities in records associated with other targets.

Lemma 6.3 confirms that these conditions hold when a lower correlation threshold is correctly set for candidate rules with single premises. Consequently, the probability differences in (44) and (45) are positive. An increase in confidence leads to an increase in correlation, proving the theorem.
\end{proof}

Since confidence is monotonic and frequency is anti-monotonic, rule construction terminates quickly, as confidence either reaches the specified upper threshold or frequency reaches the specified lower threshold.

\subsection{Independence Assumption and Practical Considerations}

In the proof of Theorem 6.4, the assumption that premises \( X \) and \( Y \) are independent was essential. If the premises are dependent, then instead of full probabilities, conditional probabilities appear in equations (44) and (45), for which Lemma 6.3 may no longer hold.

In real databases, some parameters may be correlated, but their number is usually small. In an expanded premise \( X \), which includes both correlated and uncorrelated parameters with candidate \( Y \), the overall correlation effect remains minor.

\subsection{Experimental Illustration}

As an example, we present rule construction results for the {\it diabetes} dataset mentioned earlier. Rules were constructed using a minimum correlation threshold of 0.35.

\begin{table}[h]
\centering
\begin{tabular}{|c|c|c|c|c|c|}
\hline
Rule &  f\_g & f\_all & conf & corr & q \\ 
\hline
BMI0  $\Rightarrow$ Goal0  & 0.757 & 0.353  & 0.658 & 0.36 & 2.128 \\
S50   $\Rightarrow$ Goal0  &  0.335  & 0.156  & 0.758 & 0.547 & 1.797 \\
BMI0, S50  $\Rightarrow$ Goal0  & 0.291 & 0.136  & 0.833 & 0.688 & 1.948 \\
BMI2  $\Rightarrow$ Goal2  & 0.25 & 0.045   & 0.769 & 0.718 & 1.783 \\
BP2   $\Rightarrow$ Goal2  & 0.488 & 0.088   & 0.488 & 0.374 & 1.437 \\
S42  $\Rightarrow$ Goal2   & 0.138 & 0.025  & 0.579 & 0.486 & 1.227 \\
S62  $\Rightarrow$ Goal2   & 0.425 & 0.077   & 0.472 & 0.356 & 1.33 \\
BMI2, BP2  $\Rightarrow$ Goal2   & 0.175 & 0.032  & 1 & 1 & 2.207 \\
BMI2, S42  $\Rightarrow$ Goal2   & 0.038 & 0.007  & 1 & 1 & 2.044 \\
BMI2, S62  $\Rightarrow$ Goal2   & 0.1 & 0.018   & 0.8 & 0.756 & 1.674 \\
BP2, S42  $\Rightarrow$ Goal2  & 0.05 & 0.009  & 0.571 & 0.477 & 1.107 \\
BP2, S62  $\Rightarrow$ Goal2   & 0.2 & 0.036   & 0.64 & 0.56 & 1.437 \\
S42, S62  $\Rightarrow$ Goal2  & 0.088 & 0.016  & 0.7 & 0.634 & 1.437 \\
\hline
\end{tabular}
\caption{Association rules built on {\it diabetes} database}
\label{tab:AR}
\end{table}

For Goal0, two candidates surpassed the threshold with correlation values of 0.36 and 0.55. Expanding the rule by combining these candidates increased the correlation to 0.69, while reducing frequency \( f_{all} \) by approximately a factor of three. At this point, rule construction for Goal0 stops, as there are no remaining candidates that exceed the threshold.

For Goal1, no candidates surpassed the threshold, meaning no rules were generated.

For Goal2, four candidates surpassed the correlation threshold, with values between 0.36 and 0.72. This resulted in the construction of six expanded rules, with correlation values ranging from 0.48 to 1. All six rules were final rules. In two cases, the maximum correlation of 1 was reached. For two other rules, the minimum frequency threshold of 0.01 was reached, preventing further expansion. Two additional rules contained candidates with the maximum index, which also prevented further expansion.

\section{Conclusion}

An efficient algorithm for constructing association rules with increasing confidence has been proposed for databases that include a target parameter.

The algorithm enables the generation of both frequent and rare rules, including those with low occurrence frequency but high confidence.

Additionally, the algorithm supports the construction of negative rules, where the presence of a rule premise negates the possibility of the target occurring.

Performance evaluations of the algorithm have been provided, supported by experimental results.

Furthermore, a theoretical justification for the proposed method of association rule construction has been presented.

\end{document}